\newtcolorbox[auto counter]{tbox}[2][]{%
    enhanced, float=hbt, drop fuzzy shadow southeast,
    colback=white!5!white, colframe=white!50!black,
    width= .97\columnwidth,sharp corners, boxrule=0.8pt,
    title={Table \thetcbcounter: #2}, #1
}
\newtheorem{lemma}{Lemma}
\newtheorem{theorem}{Theorem}[section]
\begin{document}
\title{Practical Homodyne Shadow Estimation}

\author{Ruyu Yang}
\affiliation{State Key Lab of Processors, Institute of Computing Technology,
Chinese Academy of Sciences, 100190, Beijing, China}
\affiliation{Graduate School of China Academy of Engineering Physics, Beijing 100193, China}

\author{Xiaoming Sun}
\email{
sunxiaoming@ict.ac.cn
}
\affiliation{State Key Lab of Processors, Institute of Computing Technology,
Chinese Academy of Sciences, 100190, Beijing, China}
\affiliation{School of Computer Science and Technology, University of Chinese Academy of Sciences, Beijing, 100049,  China.}

\author{Hongyi Zhou}
\email{
zhouhongyi@ict.ac.cn
}

\affiliation{State Key Lab of Processors, Institute of Computing Technology,
Chinese Academy of Sciences, 100190, Beijing, China}

\begin{abstract}
Shadow estimation provides an efficient framework for estimating observable expectation values using randomized measurements. While originally developed for discrete-variable systems, its recent extensions to continuous-variable (CV) quantum systems face practical limitations due to idealized assumptions of continuous phase modulation and infinite measurement resolution. In this work, we develop a practical shadow estimation protocol for CV systems using discretized homodyne detection with a finite number of phase settings and quadrature bins. We construct an unbiased estimator for the quantum state and establish both sufficient conditions and necessary conditions for informational completeness within a truncated Fock space up to $n_{\mathrm{max}}$ photons. We further provide a comprehensive variance analysis, showing that the shadow norm scales as $\mathcal{O}(n_{\mathrm{max}}^4)$, improving upon previous $\mathcal{O}(n_{\mathrm{max}}^{13/3})$ bounds. Our work bridges the gap between theoretical shadow estimation and experimental implementations, enabling robust and scalable quantum state characterization in realistic CV systems.
\end{abstract}

\maketitle

\section{Introduction}

Shadow estimation has emerged as a powerful framework for efficiently estimating expectation values of quantum observables \cite{huang2020predicting, elben2023randomized, roth2018recovering, flammia2012quantum, paini2023introduction}. It constructs succinct classical descriptions ("classical shadows") of quantum states using randomized measurements across various bases. This approach enables the simultaneous estimation of expectation values for a large number of observables with provably optimal logarithmic sample complexity scaling, providing rigorous bounds on estimator variance. Its efficiency in sample complexity makes it widely applied to various tasks, including entanglement detection, error mitigation, and verifying complex quantum devices \cite{elben2018probing, vermersch2018probing, huggins2021virtual, endo2021hybrid, huang2022provably, brydges2019probing}.

Recent work has extended the classical shadow formalism to continuous-variable (CV) quantum optical systems \cite{becker2024classical, gandhari2024precision, aarts2023classical, qu2023classical, braunstein2005quantum, weedbrook2012gaussian}. Unlike finite-dimensional systems, CV states reside in an infinite-dimensional Fock space, making direct application of shadow estimation challenging \cite{walls2008quantum, gerry2005introductory}. Existing approaches address this by imposing a photon number cutoff, $n_{\mathrm{max}}$, confining the state to a finite-dimensional subspace. Within this subspace, protocols based on randomized homodyne detection have been developed \cite{gandhari2024precision, d1994detection, leonhardt1997measuring, lvovsky2009continuous}, and sample complexity bounds for achieving target estimation accuracy have been established.

However, practical implementations of these CV shadow protocols face significant experimental limitations. They crucially rely on idealized assumptions about the homodyne detection process: (1) the local oscillator (LO) phase $\theta$ is modulated continuously and sampled uniformly from $[0, 2\pi)$, and (2) the quadrature measurement outcome $x_{\theta}$ is resolved with infinite precision, yielding continuous outcomes. In reality, experimental constraints necessitate approximations: the LO phase is modulated discretely, typically set to $N$ values $\theta_k = 2k\pi/N$ ($k=0,\ldots,N-1$), and the quadrature axis is partitioned into $M$ finite-width bins $\{I_i\}$ due to finite detector resolution and analog-to-digital conversion. Consequently, the measurement is described by a discrete set of Positive Operator-Valued Measure (POVM) elements $\{\Pi_{i,k}\}$, corresponding to detecting an outcome in bin $I_i$ at phase setting $\theta_k$. Existing theoretical frameworks assuming continuous phase and quadrature measurements are not directly applicable to this experimentally relevant, discretized scenario \cite{cao2015discrete, smithey1993measurement, breitenbach1997measurement, grosshans2002continuous, madsen2012continuous}.


In this work, we bridge this gap by developing a shadow estimation protocol tailored to practical homodyne detection with discrete phase settings and binned quadrature outcomes. We construct an unbiased estimator based on the discrete POVM $\{\Pi_{i,k}\}$ and rigorously establish both sufficient (Theorem~\ref{theorem:1}: $N \geq 2n_{\mathrm{max}} + 1$, $M \geq n_{\mathrm{max}} + 1$) and necessary (Theorem~\ref{theorem:2}) conditions for the POVM to be informationally complete within the truncated $n_{\mathrm{max}}$-photon subspace. Furthermore, we perform a comprehensive variance analysis for estimating arbitrary observables (Theorem~\ref{theorem:3}), showing that the shadow norm $\|X\|_E$ scales as $\mathcal{O}(n_{\mathrm{max}}^4)$ under the informational completeness conditions, improving upon prior $\mathcal{O}(n_{\mathrm{max}}^{13/3})$ bounds derived from worst-case trace distance arguments. Numerical simulations illustrate how the variance varies with the number of discrete quadrature bins $M$, the number of discrete phases $N$, and the photon number cutoff $n_{\mathrm{max}}$ for a given quantum state and observable. This work provides a rigorous and experimentally feasible framework for estimating expectation values using classical shadows.

\section{POVM Representation of discrete Homodyne Detection}\label{sec:discretePOVM}

In practical implementations, homodyne detection involves discretizing both the quadrature measurement outcomes and the local oscillator (LO) phases to accommodate finite resolution and experimental constraints. Specifically, the LO phase $\theta$ is sampled at discrete values, typically chosen as
\begin{align}
\theta_k &= \frac{2k\pi}{N}
\end{align}
for $k = 0, 1, \ldots, N-1$, where $N$ is the total number of phase settings. Besides, the continuous quadrature variable $x$ is divided into $M$ bins $\{I_i\}$ such that each bin $I_i$ spans the range $[x_i, x_{i+1})$ for $i = 1, 2, \ldots, M$.
This discretization facilitates data collection and processing while approximating the continuous measurement scenario.

For each combination of quadrature bin $I_i$ and phase $\theta_k$, the corresponding measurement operator $\Pi_{i,k}$ is defined as
\begin{align}
\Pi_{i,k} &= \int_{I_i} dx \, |x_{\theta_k}\rangle \langle x_{\theta_k}|,
\end{align}
where $|x_{\theta_k}\rangle$ are the eigenstates of the quadrature operator $\hat{X}_{\theta_k}=(ae^{\mathrm{i}\theta_k}+a^\dag e^{-\mathrm{i}\theta_k})/\sqrt{2}$:
\begin{align}
\hat{X}_{\theta_k} |x_{\theta_k}\rangle &= x |x_{\theta_k}\rangle.
\end{align}



The set of measurement operators $\{\Pi_{i,k}\}$ forms a Positive Operator-Valued Measure (POVM) that describes the homodyne detection process under discretization. The POVM elements satisfy the normalization relation
\begin{align}
\sum_{k=0}^{N-1} \sum_{i=1}^{M} \Pi_{i,k} &= \mathbb{I},
\end{align}
where $\mathbb{I}$ is the identity operator in the Hilbert space of the signal field. This ensures that the probabilities of all possible measurement outcomes sum to unity, a defining property of any valid quantum measurement \cite{nielsen2010quantum, preskill1998lecture}. We further assume each phase is chosen with equal probability, then $\sum_{i=1}^{M} \Pi_{i,k} = \mathbb{I}/N$.

Each POVM element $\Pi_{i,k}$ is a positive semi-definite operator, and the probability of obtaining a measurement outcome corresponding to bin $I_i$ with phase $\theta_k$ is given by
\begin{align}
P(i,k) &= \text{Tr}(\rho \Pi_{i,k}).
\end{align}


To facilitate practical computations and theoretical analyses, it is often useful to express the POVM elements $\Pi_{i,k}$ in a specific basis. Here, we consider the photon number (Fock) basis and introduce a photon number truncation to make the calculations tractable. Assume that the Hilbert space is truncated to a maximum photon number $n_{\mathrm{max}}$, such that the basis states are $|0\rangle, |1\rangle, \ldots, |n_{\mathrm{max}}\rangle$.
Photon number truncation involves limiting the Hilbert space to a finite number of photon states. This is a common approximation in theoretical studies to simplify calculations, especially when dealing with states that have negligible probability amplitudes beyond a certain photon number.


The matrix elements of $\Pi_{i,k}$ in the photon number basis $\{|n\rangle\}$ are given by
\begin{equation}
\begin{aligned}
\Pi_{i,k}^{(m,n)} &= \langle m | \Pi_{i,k} | n \rangle \\
&= \int_{I_i} dx \, \langle m | x_{\theta_k} \rangle \langle x_{\theta_k} | n \rangle,
\end{aligned}
\end{equation}
where $m, n = 0, 1, \ldots, n_{\mathrm{max}}$.
The quadrature eigenstates $|x_{\theta_k}\rangle$ can be expressed in the photon number basis using the wavefunction
\begin{equation}
\begin{aligned}
\langle n | x_{\theta_k} \rangle &= \psi_n(x_{\theta_k}) \\
&= e^{\mathrm{i}n\theta}\frac{1}{\sqrt{2^n n! \sqrt{\pi}}} H_n(x_{\theta_k}) e^{-x_{\theta_k}^2 / 2},
\end{aligned}
\end{equation}
where $H_n(x)$ is a Hermite polynomial of order $n$.
Substituting this into the expression for $\Pi_{i,k}^{(m,n)}$, we obtain the matrix elements
\begin{equation}\label{eq:povm_metrix_element}
\begin{aligned}
\Pi_{i,k}^{(m,n)} &= \int_{I_i} dx \, \psi_m(x_{\theta_k}) \psi^*_n(x_{\theta_k}) \\
&= e^{\mathrm{i}(m-n)\theta_k}\int_{I_i} dx \, \frac{1}{\sqrt{2^m m! 2^n n! \pi}} H_m(x) H_n(x) e^{-x^2}.
\end{aligned}
\end{equation}
Across the paper, we use superscript $(m,n)$ to represent the label of matrix entries and subscript $i,k$ to represent the label of discrete phases and quadrature bins.







\section{Construct the Classical Shadow}

In this section, we propose the classical shadow of a continuous variable quantum state $\rho$ constructed by POVM of practical homodyne measurement, $\{\Pi_{i,k}\}$, where $i = 1, 2, \dots, M$ indexes the quadrature bins and $k = 1, 2, \dots, N$ indexes the discrete phases. The experimental settings are illustrated in Fig.~\ref{fig:overview}. First we define the linear map $C_E$ as
\begin{align}
    C_E(\rho) = \sum_{i=1}^M \sum_{k=1}^N \Tr(\rho \Pi_{i,k}) \Pi_{i,k}/|I_i|,
\end{align}
where $\Pi_{i,k}$ are the POVM elements, $|I_i| = x_{i+1}-x_i$ is the width of the quadrature bin, and $\Tr(\rho \Pi_{i,k})$ denotes the probability of obtaining a measurement outcome corresponding to the quadrature bin $I_i$ with phase $\theta_k$.

\begin{figure*}
    \centering
    \includegraphics[width=0.8\linewidth]{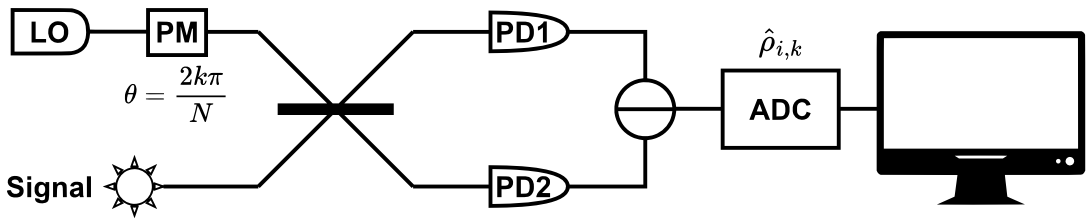}
    \caption{Overview of the experimental settings of our protocol. They are composed of a homodyne detector with a discrete-modulated local oscillator, an analog-to-digital converter discretizing the measurement output and a classical computer for the post-processing. LO: local oscillator; PM: phase modulator; PD: photo detector.}
    \label{fig:overview}
\end{figure*}
When the linear map $C_E$ is invertible, the classical shadow can be constructed as
\begin{align}\label{eq:def_shadow}
    \hat{\rho}_{i,k} = C_E^{-1}(\Pi_{i,k}/|I_i|),
\end{align}
where the outcome $(i,k)$ is obtained from the measurement. We will give the sufficient conditions and necessary conditions of an invertible $C_E$ in Sec.~\ref{sec:ICPOVM}.
We now formally prove that this construction yields an unbiased estimate of the quantum state $\rho$. The final estimator $\hat{\rho}$ is obtained by averaging a large number of single-shot estimators, or "snapshots", $\hat{\rho}_{i,k}$. By the linearity of expectation, it is sufficient to show that the expectation value of a single snapshot, taken over all possible measurement outcomes, is equal to the true state $\rho$.

The expectation value $\mathbb{E}[\hat{\rho}]$ is calculated by summing over all possible outcomes $(i,k)$, weighted by their respective probabilities of occurrence $P(i,k) = \text{Tr}(\rho\Pi_{i,k})$, 
\begin{align}
    \mathbb{E}[\hat{\rho}] &= \sum_{i=1}^{M} \sum_{k=1}^{N} P(i,k) \hat{\rho}_{i,k} \nonumber \\
    &= \sum_{i=1}^{M} \sum_{k=1}^{N} \text{Tr}(\rho\Pi_{i,k}) C_E^{-1}\left(\frac{\Pi_{i,k}}{|I_i|}\right),
\end{align}
where we have substituted the definition of $\hat{\rho}_{i,k}$ from Eq.~\eqref{eq:def_shadow}. Since the inverse map $C_E^{-1}$ is a linear superoperator, we can move it outside the summation:
\begin{align}
    \mathbb{E}[\hat{\rho}] &= C_E^{-1}\left( \sum_{i=1}^{M} \sum_{k=1}^{N} \text{Tr}(\rho\Pi_{i,k}) \frac{\Pi_{i,k}}{|I_i|} \right).
\end{align}
The expression inside the parentheses is precisely the definition of the linear map $C_E$ acting on the state $\rho$. Therefore, the equation simplifies to:
\begin{align}
    \mathbb{E}[\hat{\rho}] &= C_E^{-1}(C_E(\rho))  = \rho.
\end{align}
This confirms that each snapshot is an unbiased estimator of the quantum state. Consequently, the average over many snapshots, $\hat{\rho} = \frac{1}{T}\sum_{t=1}^T \hat{\rho}^{(t)}$, is also an unbiased estimator of $\rho$. This property is fundamental to the validity of the classical shadow formalism.

To estimate the expectation value of an observable $X$, we compute
\begin{equation}
\hat{X} = \Tr(X \hat{\rho}),
\end{equation}
where $\hat{\rho}$ is the classical shadow of the quantum state $\rho$. Then from the unbiasedness of classical shadow, $\hat{X}$ is also an unbiased estimator of $\Tr(X\rho)$.

The variance for estimating $\Tr(X\rho)$ is given by
\begin{equation}\label{eq:variance}
\begin{aligned}
    \mathrm{var}(\hat{X}) &=    
    \sum_{i=1}^M \sum_{k=1}^N \left| \Tr\left(C_E^{-1}(\Pi_{i,k}/|I_i|) X \right) \right|^2 \Tr(\rho \Pi_{i,k}) - \langle X \rangle^2 \\
    &\leq \lambda_{\mathrm{max}} \left\{ \sum_{i=1}^M \sum_{k=1}^N \left| \Tr\left(C_E^{-1}(\Pi_{i,k}/|I_i|) X \right) \right|^2 \Pi_{i,k} \right\} \\
    &:= \| X \|_E,
\end{aligned}
\end{equation}
where $\lambda_{\mathrm{max}}(\cdot)$ denotes the maximal eigenvalue of the corresponding operator. The term $\| X \|_E$ is referred to as the shadow norm of the observable $X$.

From Bernstein's inequality, we have
\begin{align}
    \mathbf{P}\left\{
        |\hat{X} - \langle X \rangle| \geq \epsilon
    \right\}
    \leq
    2 \exp\left( \frac{-\epsilon^2 / 2}{\| X \|_E + 2\epsilon / 3} \right).
\end{align}
If we use the shadow obtained by performing $T$ measurements to estimate $X$, then the corresponding probability bound is
\begin{align}
    \mathbf{P}\left\{
        |\hat{X} - \langle X \rangle| \geq \epsilon
    \right\}
    \leq
    2 \exp\left( \frac{-T\epsilon^2 / 2}{\| X \|_E + 2\epsilon / 3} \right).
\end{align}

In summary, we construct the classical shadow of a quantum state $\rho$ using a set of POVMs $\{\Pi_{i,k}\}$ corresponding to different quadrature measurements. The linear map $C_E$ aggregates the measurement data, and its inverse is used to create unbiased estimators $\hat{\rho}_{i,k}$ for the density matrix elements. By averaging these estimators, we obtain an estimate $\hat{\rho}$ of the quantum state. The variance of the observable estimate $\hat{X}$ is bounded by the shadow norm $\| X \|_E$, which quantifies the sensitivity of the observable to the measurement scheme. Bernstein's inequality provides probabilistic guarantees that the estimate $\hat{X}$ is close to the true expectation value $\langle X \rangle$ with high probability, depending on the number of measurements $T$ and the shadow norm. This framework is crucial for efficiently estimating expectations from experimental data, ensuring both accuracy and reliability in quantum information processing tasks.

\section{Conditions for Informationally Complete POVMs}\label{sec:ICPOVM}
The existence of inverse map $C_E^{-1}$ in Eq.~\eqref{eq:def_shadow} requires that the POVM $\{\Pi_{i,k}\}$ is informationally complete \cite{busch1995operational, d2004quantum, scott2006tight}. In this section, we give the sufficient and necessary conditions for it.

\begin{theorem}
    Given a photon number cutoff $n_{\mathrm{max}}$, when
    \begin{equation}
    \begin{aligned}
    N &\geq 2n_{\mathrm{max}} + 1, \\
    M &\geq n_{\mathrm{max}} + 1,
    \end{aligned}
    \end{equation}
    there exists a set of quadrature bins $\{I_i\}$ such that the POVM $\{\Pi_{i,k}\}$, where $i = 1, 2, \dots, M$ and $k = 1, 2, \dots, N$, is informationally complete.
    \label{theorem:1}
\end{theorem}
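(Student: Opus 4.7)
The plan is to show that the span of $\{\Pi_{i,k}\}$ covers the full operator space on the $(n_{\mathrm{max}}+1)$-dimensional truncated Fock space, which is equivalent to invertibility of $C_E$. I would exploit the two hypotheses independently: $N \geq 2n_{\mathrm{max}}+1$ decouples the problem across diagonals via a discrete Fourier transform, while $M \geq n_{\mathrm{max}}+1$ handles the per-diagonal span via Hermite-function integrals. The key structural observation is from Eq.~\eqref{eq:povm_metrix_element}: each matrix element factors as $\Pi_{i,k}^{(m,n)} = e^{\mathrm{i}(m-n)\theta_k}\, c_{m,n}(i)$, where $c_{m,n}(i) = \int_{I_i}\psi_m(x)\psi_n(x)\,dx$ is real and depends only on the bin.

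First I would use the phase structure. With $\theta_k = 2k\pi/N$ and $N \geq 2n_{\mathrm{max}}+1$, the vectors $(e^{\mathrm{i}d\theta_k})_{k=0}^{N-1}$ for $d \in \{-n_{\mathrm{max}},\ldots,n_{\mathrm{max}}\}$ are linearly independent in $\mathbb{C}^N$ (they form a $(2n_{\mathrm{max}}+1) \times N$ Vandermonde block with pairwise distinct roots $e^{2\pi\mathrm{i}d/N}$). Therefore, for each fixed bin $i$, an inverse discrete Fourier transform in $k$ shows that the complex span of $\{\Pi_{i,k}\}_k$ contains all $2n_{\mathrm{max}}+1$ of the ``$d$-diagonal'' operators $T_i^{(d)} := \sum_{m-n=d} c_{m,n}(i)|m\rangle\langle n|$. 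Informational completeness then reduces to showing that, for each $d$, the operators $\{T_i^{(d)}\}_{i=1}^M$ span the $(n_{\mathrm{max}}+1-|d|)$-dimensional subspace of operators supported on the $d$-th (off-)diagonal.

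Next I would verify the per-diagonal span by a linear-independence argument on the integrals. This amounts to showing that the $M \times (n_{\mathrm{max}}+1-|d|)$ real matrix $B^{(d)}$ with entries $B^{(d)}_{i,n} = c_{n+|d|,n}(i) = \int_{I_i} g_n^{(d)}(x)\,dx$ has full column rank, where $g_n^{(d)}(x) := \psi_{n+|d|}(x)\psi_n(x)$ is a polynomial of degree $2n+|d|$ times the common weight $e^{-x^2}$. Since these polynomial factors have pairwise distinct degrees, the family $\{g_n^{(d)}\}_{n=0}^{n_{\mathrm{max}}-|d|}$ is linearly independent on $\mathbb{R}$, and consequently so are the antiderivatives $F_n^{(d)}(y) := \int_{-\infty}^y g_n^{(d)}(x)\,dx$. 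Applying cumulative-sum row operations to $B^{(d)}$ converts its rank into the rank of an evaluation matrix $(F_n^{(d)}(t_i))_{i,n}$ at the bin endpoints (plus one boundary row at $+\infty$), whose maximal minors are non-trivial real-analytic functions of the endpoints.

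Finally I would choose the bin boundaries so that every rank condition holds simultaneously. For each $d \in \{0,1,\ldots,n_{\mathrm{max}}\}$, the set of partitions making $B^{(d)}$ column-rank-deficient is the zero locus of a non-trivial real-analytic function of the boundaries, and hence has Lebesgue measure zero; the union over the finitely many values of $d$ is still measure zero, so a generic partition into $M \geq n_{\mathrm{max}}+1$ bins works for every diagonal at once. I expect this uniform step to be the main obstacle: the per-diagonal linear independence is essentially elementary once the phase is factored out, but the theorem demands a \emph{single} set $\{I_i\}$ that informationally completes the POVM across all diagonals simultaneously. The genericity / measure-theoretic argument above sidesteps this by yielding existence of such a partition without any explicit construction.
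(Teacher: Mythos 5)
Your proposal is correct and follows essentially the same route as the paper: both factor each POVM matrix element into a phase part handled by a Vandermonde/trigonometric-polynomial argument requiring $N \ge 2n_{\mathrm{max}}+1$ (the paper's Lemma~\ref{lemma:1}) and a per-diagonal Hermite-integral part whose full column rank requires $M \ge n_{\mathrm{max}}+1$ (the paper's Lemma~\ref{lemma:2}). The only divergence is the final existence step --- you invoke a measure-zero/genericity argument over the bin boundaries, whereas Lemma~\ref{lemma:2} constructs the bins greedily one endpoint at a time by avoiding finitely many polynomial zeros; the paper itself appeals to your genericity reasoning when justifying Algorithm~\ref{alg:adaptive_quad_bins}, so this is not a substantive difference.
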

This theorem establishes that with a sufficiently large number of phase settings and quadrature bins, specifically when $N \geq 2n_{\mathrm{max}} + 1$ and $M \geq n_{\mathrm{max}} + 1$, it is possible to design a homodyne detection scheme that fully captures all the necessary information to reconstruct any quantum state truncated at $n_{\mathrm{max}}$ photons. 

\begin{proof}
We give a proof by contradiction to demonstrate the informationally completeness of the POVM $\{\Pi_{i,k}\}$. The method involves assuming the POVM is incomplete and showing this leads to a contradiction given the conditions on the number of discrete phases $N$ and quadrature bins $M$, thus proving it captures all state information.

Each POVM element $\Pi_{i,k}$ corresponds to a projection onto a specific quadrature bin $I_i$ at a particular phase setting $\theta_k$. We express these elements in the photon number (Fock) basis, considering a photon number truncation up to $n_{\mathrm{max}}$. This truncation limits the Hilbert space to a finite dimension of $(n_{\mathrm{max}} + 1)$, facilitating the analysis.
Recall Eq.~\eqref{eq:povm_metrix_element}, the matrix elements of $\Pi_{i,k}$ in the photon number basis $\{|n\rangle\}$ are given by an integration of the Hermite polynomials. For computational purposes, each POVM element $\Pi_{i,k}$ is vectorized by stacking its columns into a single column vector $\Vec{\Pi}_{i,k}$. This results in a vector of length $(n_{\mathrm{max}} + 1)^2$. Collecting all POVM elements, we construct the measurement matrix $E$ as
    \begin{align}
        E = 
        \begin{pmatrix}
            \Vec{\Pi}_{1,1} & \Vec{\Pi}_{1,2} & \cdots & \Vec{\Pi}_{M,N}
        \end{pmatrix},
    \end{align}
    where $E$ is a matrix of size $(n_{\mathrm{max}} + 1)^2 \times (M \times N)$.

    Assume, for contradiction, that the POVM $\{\Pi_{i,k}\}$ is not informationally complete. This implies that the measurement matrix $E$ does not have full rank, i.e.,
    \begin{align}
        \text{rank}(E) &< (n_{\mathrm{max}} + 1)^2.
    \end{align}
    Therefore, there exists a non-trivial vector $\Vec{r} \in \mathbb{C}^{(n_{\mathrm{max}} + 1)^2}$ such that
    \begin{align}
       \Vec{r} E  &= 0.
    \end{align}
    Expanding this equation, we have
    \begin{align}
        \sum_{i=1}^{M} \sum_{k=1}^{N} r_{i,k} \Vec{\Pi}_{i,k} &= 0,
    \end{align}
    where not all coefficients $r_{i,k}$ are zero.

    To analyze this equation, we express the measurement matrix $E$ in terms of phase contributions corresponding to $e^{\mathrm{i}(m-n)\theta}$ and Hermite polynomial contributions corresponding to the rest integration in Eq.~\eqref{eq:povm_metrix_element}. Define the following vectors:
    \begin{equation}\label{eq:phase_coeff}
    \begin{aligned}
        \Vec{C}_{a} &= 
        \begin{pmatrix}
            1, & \cos\left(\frac{2a\pi}{N}\right), & \cdots, & \cos\left(\frac{2a(N-1)\pi}{N}\right)
        \end{pmatrix}, \\
        \Vec{S}_{a} &= 
        \begin{pmatrix}
            0, & \sin\left(\frac{2a\pi}{N}\right), & \cdots, & \sin\left(\frac{2a(N-1)\pi}{N}\right)
        \end{pmatrix},
    \end{aligned}
    \end{equation}
    for $a = 0, 1, \ldots, n_{\mathrm{max}}$.
    Additionally, define
    \begin{align}\label{eq:vec_Hmn}
        \Vec{H}^{(m,n)} &= 
        \begin{pmatrix}
            H^{(m,n)}_1, & H^{(m,n)}_2, & \cdots, & H^{(m,n)}_M
        \end{pmatrix},
    \end{align}
    where
    \begin{align}
        H_i^{(m,n)} &= \frac{1}{\sqrt{2^{m+n} m! n! \pi}} \int_{x_i}^{x_{i+1}} H_m(x_{\theta_i}) H_n(x_{\theta_i}) e^{-x_{\theta_i}^2} dx.
    \end{align}
 With these definitions, each row vector $\vec{E}^{(m,n)}$ of the measurement matrix $E$ can be expressed as
    \begin{align}
        \vec{E}^{(m,n)} &= \Vec{C}_{m-n} \otimes \Vec{H}^{(m,n)} + \mathrm{i} \Vec{S}_{m-n}\otimes \Vec{H}^{(m,n)},
    \end{align}
    where $\otimes$ denotes the Kronecker product.
    Performing row operations that preserve the rank of $E$, we transform $E$ into a new matrix $E'$:
    \begin{align}
        E' = 
        \begin{pmatrix}  
            \Vec{E}^{\prime(0,0)} \\  
            \Vec{E}^{\prime(0,1)}  \\  
            \vdots \\  
            \Vec{E}^{\prime(n_{\mathrm{max}},n_{\mathrm{max}})}  
        \end{pmatrix},
    \end{align}
    where each transformed row $\Vec{E}^{\prime (m,n)}$ is given by
    \begin{align}
        \Vec{E}^{\prime (m,n)} =
        \begin{cases}  
            \Vec{C}_{m-n} \otimes \Vec{H}^{(m,n)} & \text{if } m - n \leq 0, \\
            \Vec{S}_{m-n} \otimes \Vec{H}^{(m,n)} & \text{if } m - n > 0.
        \end{cases}
    \end{align}
    
    Under the assumption that $\text{rank}(E') < (n_{\mathrm{max}} + 1)^2$, there exists a non-trivial set of coefficients $r_{m,n}$ such that
    \begin{align}
        0 &= \sum_{m,n} r_{m,n} \Vec{E}^{\prime (m,n)} \nonumber \\
        &= \sum_{m \leq n} r_{m,n} \Vec{C}_{m-n} \otimes \Vec{H}^{(m,n)} + \sum_{m > n} r_{m,n} \Vec{S}_{m-n} \otimes \Vec{H}^{(m,n)}.
    \end{align} 
    Rewriting the above equation, we obtain
    \begin{align}
        0 &= \sum_{a=0}^{n_{\mathrm{max}}} \Vec{C}^T_{a} \sum_{m-n = -a} r_{m,n} \Vec{H}^{(m,n)} + \sum_{a=1}^{n_{\mathrm{max}}} \Vec{S}^T_{a} \sum_{m-n = a} r_{m,n} \Vec{H}^{(m,n)} \nonumber \\
        &= AB,
    \end{align}
    where
    \begin{align}\label{eq:def_matrixA}
        A &= 
        \begin{pmatrix}
            \Vec{C}_{0}^T & \Vec{C}_{1}^T & \Vec{S}_{1}^T & \Vec{C}_{2}^T & \Vec{S}_{2}^T & \cdots & \Vec{C}_{n_{\mathrm{max}}}^T & \Vec{S}_{n_{\mathrm{max}}}^T
        \end{pmatrix},
    \end{align}
    and
    \begin{align}
        B &=   
        \begin{pmatrix}  
            \sum_{m-n = 0} r_{m,n} \Vec{H}^{(m,n)} \\  
            \sum_{m-n = -1} r_{m,n} \Vec{H}^{(m,n)}\\  
            \sum_{m-n = 1} r_{m,n} \Vec{H}^{(m,n)} \\  
            \vdots \\  
            \sum_{m-n = -n_{\mathrm{max}}} r_{m,n} \Vec{H}^{(m,n)} \\
            \sum_{m-n = n_{\mathrm{max}}} r_{m,n} \Vec{H}^{(m,n)} \\  
        \end{pmatrix}.
    \end{align}
    Therefore, if the POVM $\{\Pi_{i,k}\}$ is not informationally complete, there exist matrices $A$ and $B$ such that
    \begin{align}
        AB = 0.
    \end{align}
    
    However, under the condition $N \geq 2n_{\mathrm{max}} + 1$, the following Lemma~\ref{lemma:1} ensures that the matrix $A$ is full rank, and thus the only solution to $AB = 0$ is $B = 0$.
 With condition $M \geq n_{\mathrm{max}} + 1$, Lemma~\ref{lemma:2} ensures that all coefficients $r_{m,n} = 0$, which contradicts our initial assumption that the vector $\Vec{r}$ is non-trivial.  
\end{proof}

\begin{lemma}
    \label{lemma:1}
    Given a photon number cutoff $n_{\mathrm{max}}$ and a matrix $A$ defined in Eq.~\eqref{eq:def_matrixA}, if the number of phase settings satisfies
    \begin{align}
        N &\geq 2n_{\mathrm{max}} + 1,
    \end{align}
    then the matrix $A$ is column-full rank.
\end{lemma}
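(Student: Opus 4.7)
The plan is to reduce the real trigonometric system defining $A$ to a complex exponential system and invoke a Vandermonde argument on roots of unity. The matrix $A$ has $N$ rows and $2n_{\max}+1$ columns: one from $\Vec{C}_0^T$ and a pair $\Vec{C}_a^T, \Vec{S}_a^T$ for each $a=1,\ldots,n_{\max}$. First I would introduce, for each integer $a$ with $|a|\leq n_{\max}$, the complex column vector
\begin{equation}
\Vec{v}_a = \bigl( 1,\; \omega^{a},\; \omega^{2a},\; \ldots,\; \omega^{(N-1)a} \bigr)^T,\qquad \omega = e^{\mathrm{i} 2\pi/N},
\end{equation}
and observe that $\Vec{C}_a^T = \tfrac12(\Vec{v}_a + \Vec{v}_{-a})$ and $\Vec{S}_a^T = \tfrac{1}{2\mathrm{i}}(\Vec{v}_a - \Vec{v}_{-a})$, with $\Vec{C}_0^T = \Vec{v}_0$. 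Thus the column span of $A$ equals the column span of the $N\times(2n_{\max}+1)$ matrix
\begin{equation}
V = \bigl(\Vec{v}_{-n_{\max}},\; \Vec{v}_{-n_{\max}+1},\; \ldots,\; \Vec{v}_{n_{\max}}\bigr),
\end{equation}
and the change of basis between the two column sets is invertible, so $A$ has full column rank iff $V$ does.

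Second, I would recognize $V$ as a Vandermonde-type matrix: its $(k,a)$ entry is $(\omega^{a})^{k}$, so $V$ is a submatrix of the full DFT Vandermonde matrix built from the nodes $\{\omega^{a} : a=-n_{\max},\ldots,n_{\max}\}$. The condition $N\geq 2n_{\max}+1$ guarantees that these $2n_{\max}+1$ exponents are pairwise distinct modulo $N$, hence the nodes $\omega^{a}$ are $2n_{\max}+1$ distinct points on the unit circle. Selecting the first $2n_{\max}+1$ rows of $V$ then produces a square Vandermonde matrix with distinct nodes, whose determinant is nonzero. This square submatrix therefore has full rank $2n_{\max}+1$, which forces $V$ itself to have full column rank $2n_{\max}+1$. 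Pulling back through the invertible change of basis gives the same conclusion for $A$.

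The main technical point to handle carefully is the bookkeeping of the change-of-basis map between $\{\Vec{C}_a^T,\Vec{S}_a^T\}_{a=0}^{n_{\max}}$ and $\{\Vec{v}_a\}_{a=-n_{\max}}^{n_{\max}}$: one must check that the $(2n_{\max}+1)\times(2n_{\max}+1)$ block-diagonal matrix with a trivial $1\times 1$ block for $a=0$ and $2\times 2$ blocks $\tfrac12\bigl(\begin{smallmatrix}1 & 1\\ -\mathrm{i} & \mathrm{i}\end{smallmatrix}\bigr)^{-1}$ for $a\geq 1$ is genuinely invertible, which is immediate. Everything else reduces to the standard Vandermonde fact and the distinctness of $N$-th roots of unity when $N\geq 2n_{\max}+1$; no analytic input beyond that is needed.
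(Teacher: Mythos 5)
Your proof is correct, and it takes a different (though closely related) route from the paper's. The paper argues by contradiction: a nontrivial vector in the kernel of $A$ (acting on columns) would produce a trigonometric polynomial $f(\theta) = c_0 + \sum_{a=1}^{n_{\mathrm{max}}}\left(c_a\cos(a\theta)+s_a\sin(a\theta)\right)$ of degree $n_{\mathrm{max}}$ vanishing at all $N$ nodes $\theta_k = 2k\pi/N$, contradicting the fact that such a polynomial has at most $2n_{\mathrm{max}}$ zeros in $[0,2\pi)$. You instead argue directly: you pass to the complex exponential columns $\vec{v}_a$, check that the block-diagonal change of basis is invertible so that $\mathrm{rank}(A)=\mathrm{rank}(V)$, and then exhibit an explicit invertible $(2n_{\mathrm{max}}+1)\times(2n_{\mathrm{max}}+1)$ Vandermonde submatrix of $V$ on the distinct nodes $\omega^{a}$. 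The two arguments are really two faces of the same fact --- the standard proof of the zero-counting bound the paper invokes is precisely your substitution $z=e^{\mathrm{i}\theta}$ followed by the fundamental theorem of algebra, i.e., Vandermonde nonvanishing --- but yours is self-contained and constructive, identifying an invertible submatrix rather than citing the bound as a black box, at the cost of a little extra bookkeeping with the change of basis. Both proofs use the hypothesis $N\geq 2n_{\mathrm{max}}+1$ in the essential place: the paper to exceed the maximal number of zeros, you to guarantee that the $2n_{\mathrm{max}}+1$ exponents are pairwise distinct modulo $N$ (and that $V$ has at least $2n_{\mathrm{max}}+1$ rows from which to extract a square submatrix). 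Your argument is complete as stated.
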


\begin{proof}
    Suppose, for the sake of contradiction, that the matrix $A$ is not of full column rank. This implies that there exists a non-trivial set of coefficients $\{c_a\}$ and $\{s_a\}$ such that
    \begin{align}\label{eq:zero_vector}
        c_0 \Vec{C}_0 + \sum_{j=1}^{n_{\mathrm{max}}} \left( c_a \Vec{C}_a + s_a \Vec{S}_a \right) = 0.
    \end{align}
    Here, $\Vec{C}_a$ and $\Vec{S}_a$ are vectors associated with the cosine and sine components of the phase settings defined in Eq.~\eqref{eq:phase_coeff}, respectively.
    
    Define the function $f(\theta)$ as
    \begin{align}
        f(\theta) = c_0 + \sum_{a=1}^{n_{\mathrm{max}}} \left( c_a \cos(a\theta) + s_a \sin(a\theta) \right).
    \end{align}
    Then the zero vector in Eq.~\eqref{eq:zero_vector}  implies that $f(\theta) = 0$ for all phases  $\theta_k = \frac{2k\pi}{N}$, where $k =0, 1, 2, \dots, N-1$.
    
    A trigonometric polynomial of degree $n_{\mathrm{max}}$ can have at most $2n_{\mathrm{max}}$ zeros in the interval $[0, 2\pi)$. However, our assumption leads to $N \geq 2n_{\mathrm{max}} + 1$ zeros, which exceeds this maximum number. This contradiction implies that our initial assumption is false.
    Therefore, the matrix $A$ must be column-full rank.
\end{proof}

Continuing the inference, if $AB=0$, then $B$ must be a zero matrix. However, we can always find a suitable set of quadrature bins $\{I_i\}$ such that $B \neq 0$. We summarize this as the following lemma.

\begin{lemma}
    \label{lemma:2}
    Given a photon number cutoff $n_{\mathrm{max}}$, 
    if the number of quadrature bins satisfies
    \begin{align}
        M &\geq n_{\mathrm{max}} + 1,
    \end{align}
    then there exists a set of quadrature bins $\{I_i\}$ such that $B \neq 0$ unless all coefficients $r_{m,n}=0$.
\end{lemma}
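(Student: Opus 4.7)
The plan is to reduce the claim to a linear-independence statement for each ``off-diagonal block'' of $B$ and then construct bins realizing that independence. The column vector $B$ decomposes into $2n_{\mathrm{max}}+1$ blocks indexed by $a=m-n\in\{-n_{\mathrm{max}},\ldots,n_{\mathrm{max}}\}$, each of the form $\sum_{m-n=a}r_{m,n}\vec{H}^{(m,n)}\in\mathbb{R}^{M}$. Since the index sets $\{(m,n):m-n=a\}$ are disjoint across $a$, the assertion ``$B=0$ forces all $r_{m,n}=0$'' is equivalent to saying that, for every fixed $a$, the $K_a:=n_{\mathrm{max}}+1-|a|$ vectors $\{\vec{H}^{(m,n)}\}_{m-n=a}$ are linearly independent. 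Because $K_a\leq n_{\mathrm{max}}+1\leq M$, this is dimensionally possible, and the task reduces to choosing the bins $\{I_i\}$ so each collection is independent.

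First I would extract the polynomial structure of the integrands: up to a fixed nonzero normalization, $H_i^{(m,n)}=\int_{x_i}^{x_{i+1}}H_m(x)H_n(x)e^{-x^2}\,dx$. For $(m,n)$ with $m-n=a$, the polynomial $H_m(x)H_n(x)$ has degree $m+n$, which ranges over the pairwise distinct values $|a|,|a|+2,\ldots,2n_{\mathrm{max}}-|a|$ as $(m,n)$ varies over the $K_a$ admissible pairs. Hence the functions $f_{m,n}(x):=H_m(x)H_n(x)e^{-x^2}$ with $m-n=a$ are linearly independent on $\mathbb{R}$ and, after ordering by degree, triangular against the monomial basis $\{x^{|a|},x^{|a|+2},\ldots,x^{2n_{\mathrm{max}}-|a|}\}$.

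Next I would promote this functional independence to vector independence through a small-bin expansion. Pick $M$ distinct sampling points $z_1<\cdots<z_M$ and surround each by a symmetric bin $I_i$ of half-width $\epsilon_i$; then $H_i^{(m,n)}=2\epsilon_i f_{m,n}(z_i)+O(\epsilon_i^3)$. For any choice of $K_a$ of the rows, the leading contribution to the $K_a\times K_a$ minor of $H^{(a)}$ is $\bigl(\prod_{j}2\epsilon_{i_j}\bigr)\det[f_{m,n}(z_{i_j})]$. Using the triangularization above, the generalized Vandermonde determinant $\det[f_{m,n}(z_{i_j})]$ factors (up to nonzero leading coefficients and Gaussian factors $e^{-z_{i_j}^2}$) into an ordinary Vandermonde in the variables $z_{i_j}^{2}$, which is nonzero for generic $z_{i_j}$. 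Hence for generic sampling points and sufficiently small bin widths the minor is nonvanishing, so $H^{(a)}$ has full column rank $K_a$.

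The main obstacle is realizing this independence for every $a\in\{-n_{\mathrm{max}},\ldots,n_{\mathrm{max}}\}$ simultaneously. I would handle this by a genericity argument: full column rank of $H^{(a)}$ is an open algebraic condition on the bin endpoints, nonempty by the preceding step, so the intersection of the $2n_{\mathrm{max}}+1$ such conditions is also nonempty. Concretely, choose the sampling points to avoid the zero sets of the finitely many generalized Vandermonde determinants, then take bin widths uniformly small enough to dominate the $O(\epsilon^{K_a+2})$ corrections; with such an $\{I_i\}$, each block of $B$ vanishes only when the corresponding $r_{m,n}$'s do, forcing $B=0\Rightarrow r_{m,n}\equiv 0$, as required.
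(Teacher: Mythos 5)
Your reduction is sound and matches the paper's: decompose $B$ into blocks indexed by $a=m-n$, and show that for each $a$ the $K_a=n_{\mathrm{max}}+1-|a|$ vectors $\{\vec{H}^{(m,n)}\}_{m-n=a}$ can be made linearly independent. Your observation that the products $H_mH_n$ with $m-n=a$ have pairwise distinct degrees (hence are linearly independent, with a triangular/alternant structure) is also correct and is a clean way to see functional independence.

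The gap is in the construction that is supposed to realize this independence. The quadrature bins are by definition a contiguous partition $I_i=[x_i,x_{i+1})$ whose union must cover the whole quadrature axis, since the protocol requires $\sum_i\Pi_{i,k}=\mathbb{I}/N$ (this normalization is used explicitly in the variance analysis). Your bins are $M$ \emph{disjoint} small intervals centered at well-separated points $z_1<\cdots<z_M$; these neither share endpoints nor tile $\mathbb{R}$, so they are not an admissible binning, and the leading-order Vandermonde factorization you rely on is only available in that inadmissible regime (contiguous bins cannot simultaneously all be narrow and have well-separated midpoints). Your closing genericity step does not repair this: showing the rank condition is nonvanishing somewhere on the larger parameter space of arbitrary disjoint intervals does not show it is nonvanishing on the lower-dimensional subspace of contiguous partitions, which is where the finitely many open conditions must be intersected. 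A symptom of working in the wrong parameter space is that you never confront a constraint the paper does use: for tiling bins and $a\neq 0$, Hermite orthogonality gives $\sum_i H_i^{(m,n)}=\int_{\mathbb{R}}H_mH_ne^{-x^2}\,dx=0$, so the rows of each off-diagonal block sum to zero and full column rank actually needs $M\geq K_a+1$ (which still yields $M\geq n_{\mathrm{max}}+1$, but must be checked). The paper avoids all of this by building the partition iteratively: fix $I_1$, then choose each successive breakpoint $x_{i+1}$ to avoid the finitely many zeros of the analytic functions $F(x_{i+1},e,a)$ obtained by pairing the new row against a basis of the current null space, so that each added bin strictly increases the column rank while contiguity is preserved. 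To salvage your route you would need either that iterative scheme or a proof that the relevant alternant determinants are not identically zero as analytic functions of the ordered breakpoints $x_1<\cdots<x_{M+1}$ themselves.
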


\begin{proof}
To prove this lemma, we define the matrix $H_a$ for each $a = 0, 1, \dots, n_{\max}$ as
\begin{align}
H_a = \big( \vec{H}^{(0,a)} \ \vec{H}^{(1,a+1)} \ \dotsc \ \vec{H}^{(n_{\max}-a,n_{\max})} \big),
\end{align}
where $\vec{H}^{(m,n)}$ are vectors derived from the integrals of Hermite polynomials over the quadrature bins in Eq.~\eqref{eq:vec_Hmn}. Then the matrix $B$ is composed of linear combinations of different $H_a$. The idea is to make all $H_a$ be column-full rank.

We start with an initial bin $I_1 = [x_1, x_2)$. This determines the first row of each $H_a$, which is denoted as $\vec{H}_a^{(1)}$. Then 
\begin{align}
         \vec{H}_{a}^{(1)} \cdot \vec{q}_{a,e} = 0
     \end{align}
 defines a $(n_{\mathrm{max}} - a)$-dimensional null space spanned by a set of basis $\{\vec{q}_{a,e}\}$ with $e \in \{ 1, 2, \dots, n_{\mathrm{max}} - a\}$.
 Next, define the function
    \begin{align}
       F(x_3, e, a) = \vec{H}^{(2)}_{a} \cdot \vec{q}_{a,e},
    \end{align}
where $\vec{H}^{(2)}_{a}$ represents the second row of $H_a$.
Each function $F$ is a polynomial in terms of $x_3$ when the initial point of the second quadrature bin $x_2$ is already given. For a given $a$, if $F(x_3, e, a) = 0$ for all $e$, then the first two rows are linearly dependent. By the fundamental theorem of algebra, each $F(x_3, e, a)$ can have at most $2n_{\mathrm{max}} +1$ zero points for $x_3$. Therefore, there always exists $x_3$ such that $F(x_3, e, a) \neq 0$. The second row contributes to increasing the column rank of by one. We can repeat such process until $H_a$ is column-full rank.

Next we discuss the number of necessary quadrature bins. Notice that the number of columns of $H_a$ is $n_{\max} - a + 1$. When $a = 0$, the integral $H_n^2(x)e^{-x^2}$ are strictly positive and linearly independent. This means there is no inherent linear dependency among the rows of $H_0$, allowing it to achieve full column rank when $M \geq n_{\max} + 1$. On the other hand, when $a \geq 1$, the orthogonality of Hermite polynomials implies that $\int_{-\infty}^\infty H_n(x)H_{n+a}(x)e^{-x^2}dx = 0$. As a result, for any choice of bins, the sum of all rows of $H_a$ must be zero. This imposes a single linear constraint on the rows, meaning that the number of quadrature bins should satisfy $M \geq n_{\max} - a + 2$. In this case, we have a sufficient condition $M \geq n_{\max} + 1$ corresponding to the minimum value $a=1$.

The matrix $B$ constructed from these $H_a$ cannot be a zero matrix unless all coefficients vanish. Therefore, as long as $M \geq n_{\mathrm{max}} + 1$, there exists a set of quadrature bins $\{I_i\}$ such that $B = 0$ implies all coefficients $r_{m,n}=0$, completing the proof.
\end{proof}

We design an algorithm to set the quadrature bins in Algorithm~\ref{alg:adaptive_quad_bins} to generate equal-spaced quadrature bins guaranteeing informational completeness. We note that the edge points are chosen according to the experimental data collected, i.e., $x_1$ should satisfy $x_1 \leq x_\mathrm{min}$ where $x_\mathrm{min}$ is the minimum quadrature value detected. Similarly, $x_{M+1}\geq x_\mathrm{\max}$ should also be satisfied. We let $L_0 = \mathrm{max}(|x_{\mathrm{min}}|,|x_{\mathrm{max}}|)$ be an input of Algorithm~\ref{alg:adaptive_quad_bins} to characterize the range of measurement outputs. The algorithm can find a valid set of quadrature bins $\{I_i\}$ with high probability because, under the conditions $M \geq n_{\max} + 1$ and $N \geq 2n_{\max} + 1$, the measurement matrix $E$ depends analytically on the bin edges through integrals of Hermite polynomials. Since the set of bin configurations that cause $E$ to be rank-deficient corresponds to zeros of nontrivial analytic functions, it forms a measure-zero subset in the space of all possible bin choices. By starting from an initial range $L_0 = \max(|x_{\min}|, |x_{\max}|)$ and gradually expanding $L$, Algorithm~\ref{alg:adaptive_quad_bins} explores a sequence of uniformly spaced bin partitions over increasingly large intervals, thereby almost surely avoiding these zero points. Consequently, the algorithm eventually encounters a binning for which $\operatorname{rank}(E) = (n_{\max}+1)^2$, ensuring informational completeness with high probability in practice.








\begin{algorithm}[htbp]
\caption{Generating equal-spaced quadrature bins with informational completeness verification}\label{alg:adaptive_quad_bins}
\SetAlgoLined 
\KwIn{
$n_{\mathrm{max}}$ (photon number cutoff); \\
$N$ (number of discrete phases, $N \geq 2n_{\mathrm{max}}+1$) \\
$M$ (number of quadrature bins, $M \geq n_{\mathrm{max}}+1$); \\
$L_0$ (initial quadrature range); \\
$\delta L$ (range expansion increment);
}
\KwOut{
Set of quadrature bins $\{I_i\}$ where POVM $\{\Pi_{i,k}\}$ is informationally complete
}

\textbf{Initialize:} \\
$L \gets L_0$; $\text{isComplete} \gets \text{false}$;

\While {$\neg \text{isComplete}$}{
    $\Delta I \gets \dfrac{2 \cdot L}{M}$; \\ 
    \For{$i \gets 1$ \KwTo $M+1$}{
        $x_i \gets -L + \Delta I \cdot (i-1)$; 
    }
    Define bins: $I_i = [x_i, x_{i+1})$ for $i = 1, 2, \dots, M$; \\
    
    Initialize $E$ as $\mathbb{C}^{(n_{\mathrm{max}}+1)^2 \times (M \times N)}$ matrix; \\
    \For{$k = 0, 1, \dots, N-1$}{
        \For{$i \gets 1$ \KwTo $M$}{
             Compute POVM element $\Pi_{i,k}$ via Eq.~\eqref{eq:povm_metrix_element}\;
            Vectorize $\Pi_{i,k}$ into $\vec{\Pi}_{i,k} \in \mathbb{C}^{(n_{\mathrm{max}}+1)^2}$; 
            Assign $\vec{\Pi}_{i,k}$ to $(k \cdot M + i)$-th column of $E$;
        }
    }
    Form $E = \big( \vec{\Pi}_{1,0}, \vec{\Pi}_{2,0}, \dots, \vec{\Pi}_{M,N-1} \big)$; \\
    
    $r \gets \mathrm{rank}(E)$;\\
    \If{$r == (n_{\mathrm{max}}+1)^2$}{
        $\text{isComplete} \gets \text{true}$; \\
        \Return $\{I_i\}$;
    }
        $L \gets L + \delta L$; 
    
}
\end{algorithm}

\begin{theorem}\label{theorem:2}
    Given a photon number cutoff $n_{\mathrm{max}}$, the POVMs $\{\Pi_{i,k}\}$ are informationally complete only if the number of phase settings $N$ satisfies one of the following conditions
        \begin{enumerate}
        \item $N \geq 2n_{\mathrm{max}} + 1$,
        \item $n_{\mathrm{max}}  < N \leq 2n_{\mathrm{max}}, \quad N \; \text{is odd}$.
        \end{enumerate}
    \end{theorem}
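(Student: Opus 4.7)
My plan is to prove Theorem~\ref{theorem:2} by contrapositive: I will show that whenever $N$ falls outside both admissible regions---that is, whenever $N \le n_{\mathrm{max}}$ or $N$ is even with $n_{\mathrm{max}} < N \le 2n_{\mathrm{max}}$---one can exhibit an explicit nonzero Hermitian operator $\sigma$ on the truncated Fock space such that $\Tr(\sigma \Pi_{i,k}) = 0$ for every pair $(i,k)$, for any $M$ and any choice of bins $\{I_i\}$. Such a $\sigma$ lies in the orthogonal complement of $\mathrm{span}\{\Pi_{i,k}\}$ and directly witnesses the failure of informational completeness.

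The observation driving the construction is that the discrete phases $\theta_k = 2\pi k/N$ resolve the frequency $d = m-n$ only modulo $N$, since $e^{-\mathrm{i} d \theta_k} = e^{-2\pi\mathrm{i} dk/N}$ depends only on $d \bmod N$. Writing
\begin{align}
\Tr(\sigma \Pi_{i,k}) = \sum_{d=-n_{\mathrm{max}}}^{n_{\mathrm{max}}} e^{-\mathrm{i} d \theta_k} \beta_d(i), \quad \beta_d(i) = \sum_{m-n=d} \sigma^{(m,n)} H_i^{(m,n)},
\end{align}
and inverting the DFT in $k$, vanishing at every $\theta_k$ is equivalent to the folded condition $\tilde{\beta}_j(i) := \sum_{d \equiv j \,(\mathrm{mod}\,N)} \beta_d(i) = 0$ for every $j \in \{0,\dots,N-1\}$ and every bin $i$. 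Any pair $d \neq d'$ in $[-n_{\mathrm{max}}, n_{\mathrm{max}}]$ with $d \equiv d' \pmod{N}$ therefore defines an aliased channel into which I can inject a nontrivial Hermitian perturbation that remains invisible to the POVM.

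For the case $N \le n_{\mathrm{max}}$, I will take $\sigma^{(N,0)} = \mathrm{i}$, $\sigma^{(0,N)} = -\mathrm{i}$, and all other matrix entries zero; since $N$ and $0$ both lie in $\{0,\ldots,n_{\mathrm{max}}\}$ this is well-defined and Hermitian. Only $\beta_{N}$ and $\beta_{-N}$ are nonzero, both aliasing to $j=0$, and by the symmetry $H_i^{(m,n)} = H_i^{(n,m)}$ of the Hermite integral their contributions cancel at every bin $i$. For the parity case $N=2q$ with $n_{\mathrm{max}} < N \le 2n_{\mathrm{max}}$ one has $q = N/2 \le n_{\mathrm{max}}$, so the same trick with entries at $(q,0)$ and $(0,q)$ works; here $-q \equiv q \pmod{N}$---this is precisely where the evenness hypothesis enters---so $\beta_{q}$ and $\beta_{-q}$ alias to the common index $j=q$ and cancel in the same way.

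The main obstacle I anticipate is conceptual rather than computational: one might hope to prove the theorem simply by showing the matrix $A$ of Eq.~\eqref{eq:def_matrixA} fails to be column-full rank outside the admissible region, but this alone is insufficient, since the coefficient vector $\vec{r}$ is constrained by Hermiticity and the resulting block $B$ in the factorization $AB=0$ may still avoid $\ker A$. My proposal sidesteps this by producing the witness $\sigma$ directly, letting Hermiticity dictate the purely imaginary entries so that the aliased Fourier coefficients cancel independently of the Hermite integrals $H_i^{(m,n)}$. Because the argument uses neither $M$ nor the specific bin edges, the obstruction holds uniformly across all homodyne discretizations, as the statement requires.
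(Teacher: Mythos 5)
Your proposal is correct and is essentially the paper's own argument in dual form: your witness $\sigma$ with entries $\pm\mathrm{i}$ at $(N,0)$ and $(0,N)$ (resp.\ $(N/2,0)$ and $(0,N/2)$) is exactly the difference of the two density matrices $|\phi\rangle\langle\phi|-|\phi'\rangle\langle\phi'|$ that the paper exhibits as indistinguishable, and both arguments hinge on the same fact that $e^{\pm\mathrm{i}N\theta_k}=1$ (resp.\ $e^{\pm\mathrm{i}(N/2)\theta_k}=(-1)^k$) renders the imaginary part of that coherence invisible at every bin. The DFT/aliasing packaging is a slightly more systematic presentation, but the mathematical content is the same.
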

This theorem delineates the necessary conditions for the POVM $\{\Pi_{i,k}\}$ to achieve informational completeness. It asserts that the informationally complete POVM is only attainable if the number of discrete phases $N$ is no less than $2n_{\mathrm{max}} + 1$, or if it lies within the range $n_{\mathrm{max}} < N \leq 2n_{\mathrm{max}} $ with $N$ being an odd integer. Intuitively, this means that there is a critical balance between the number of phase settings and the photon number cutoff that must be maintained to ensure that the measurements are sufficiently diverse and finely resolved to capture all aspects of the quantum state.

\begin{proof}
  To intuitively show the POVM $\{\Pi_{i,k}\}$ is not informationally complete under certain conditions on $N$, the approach is to construct distinct quantum states that yield identical measurement probabilities, proving the POVM cannot distinguish between them. 
Assume that the phase settings are uniformly distributed, such that        $\theta_k = 2\pi k/N$, for $k = 1, 2, \dots, N$. First, if $N \leq n_{\max}$, we will prove that POVM $\{\Pi_{i,k}\}$ cannot fully reconstruct the quantum state.

    Since $N \leq n_{\max}$, we can always choose a quantum state
    \begin{align}
        |\phi\rangle = \frac{\alpha|0\rangle + \beta |N\rangle}{\sqrt{2}},
    \end{align}
    where $\alpha$ is a real coefficient and $\beta$ is a complex coefficient. This state is a superposition of the vacuum state $|0\rangle$ and the $N$-photon Fock state $|N\rangle$. 
The probability of obtaining outcome $(i,k)$ when measuring the quantum state $|\phi\rangle$ is
    \begin{equation}\label{eq:p_ik}
    \begin{aligned}
        P(i,k) &= \text{Tr}(\Pi_{i,k} |\phi\rangle\langle\phi|) \\
               &= \frac{1}{2} \left( \Pi_{i,k}^{(0,0)} \alpha^2 + \Pi_{i,k}^{(N,N)} |\beta|^2 + \Pi_{i,k}^{(0,N)} \alpha \beta +\Pi_{i,k}^{(N,0)}\alpha \beta^*  \right),
    \end{aligned}
    \end{equation}
which involves the following POVM matrix elements,
    \begin{equation}
    \begin{aligned}
        \Pi_{i,k}^{(0,0)} &= \frac{1}{\sqrt{\pi}} \int_{x_i}^{x_{i+1}} e^{-x^2} H_0(x) H_0(x) dx, \\
        \Pi_{i,k}^{(0,N)} &= \frac{e^{-\mathrm{i}N\theta_k}}{\sqrt{2^{N} N! \pi}} \int_{x_i}^{x_{i+1}} e^{-x^2} H_0(x) H_{N}(x) dx, \\
        \Pi_{i,k}^{(N,0)} &= \frac{e^{\mathrm{i}N\theta_k}}{\sqrt{2^{N} N! \pi}} \int_{x_i}^{x_{i+1}} e^{-x^2} H_{N}(x) H_0(x) dx, \\
        \Pi_{i,k}^{(N,N)} &= \frac{1}{\sqrt{2^{2N} (N!)^2 \pi}} \int_{x_i}^{x_{i+1}} e^{-x^2} H_{N}(x) H_N(x) dx.
    \end{aligned}
    \end{equation}
Notice that the global phases $e^{-\mathrm{i}N\theta_k}=e^{\mathrm{i}N\theta_k}=1$. Then $ \Pi_{i,k}^{(0,N)} =  \Pi_{i,k}^{(N,0)}$ and Eq.~\eqref{eq:p_ik} is simplified to  
    \begin{equation}
    \begin{aligned}
        P(i,k)= \frac{1}{2} \left( \Pi_{i,k}^{(0,0)} \alpha^2 + \Pi_{i,k}^{(N,N)} |\beta|^2 + 2 \Pi_{i,k}^{(0,N)} \alpha \text{Re}(\beta) \right).
    \end{aligned}
    \end{equation}
    This probability expression depends only on the real part of $\beta$ and its magnitude $|\beta|$. i.e., the measurement probabilities is independent of the imaginary part. This implies that the POVM $\{\Pi_{i,k}\}$ cannot distinguish $\ket{\phi} = (\alpha|0\rangle + \beta |N\rangle)/\sqrt{2}$ and $\ket{\phi^\prime} = (\alpha|0\rangle + \beta^* |N\rangle)/\sqrt{2}$, indicating that information about the imaginary component of $\beta$ is inaccessible through these measurements. Therefore, the POVM is incomplete.
    
    Next, we assume that the number of phase settings $N$ satisfies
    \begin{align}
        n_{\mathrm{max}} < N \leq 2n_{\mathrm{max}}  \quad \text{and} \quad N \quad \text{is even}.
    \end{align}
  Similarly, we consider a quantum state
    \begin{align}
        |\psi\rangle = \frac{\alpha|0\rangle + \beta|\frac{N}{2}\rangle}{\sqrt{2}},
    \end{align}
   where $\alpha$ is real and $\beta$ is complex.
   The probability of obtaining outcome $(i,k)$ when measuring the quantum state $|\psi\rangle$ is
 \begin{equation}  
 \begin{aligned}
        P(i,k) &= \text{Tr}(\Pi_{i,k} |\psi\rangle\langle\psi|) \\
               &= \frac{1}{2} \left( \Pi_{i,k}^{(0,0)} \alpha^2 + \Pi_{(i,k)}^{(\frac{N}{2},\frac{N}{2})} |\beta|^2 - 2 \Pi_{i,k}^{(0,\frac{N}{2})} \alpha \text{Re}(\beta) \right), 
    \end{aligned}
    \end{equation}
   where the POVM matrix elements are
   \begin{equation}
    \begin{aligned}
        \Pi_{i,k}^{(0,0)} &= \frac{1}{\sqrt{\pi}} \int_{x_i}^{x_{i+1}} e^{-x^2} H_0(x) H_0(x) dx, \\
        \Pi_{i,k}^{(0,\frac{N}{2})} &= -\frac{1}{\sqrt{2^{\frac{N}{2}} (\frac{N}{2})! \pi}} \int_{x_i}^{x_{i+1}} e^{-x^2} H_0(x) H_{\frac{N}{2}}(x) dx, \\
        \Pi_{i,k}^{(\frac{N}{2},0)} &= -\frac{1}{\sqrt{2^{\frac{N}{2}} (\frac{N}{2})! \pi}} \int_{x_i}^{x_{i+1}} e^{-x^2} H_{\frac{N}{2}}(x) H_0(x) dx, \\
        \Pi_{i,k}^{(\frac{N}{2},\frac{N}{2})} &= \frac{1}{\sqrt{2^{N} (\frac{N}{2})! (\frac{N}{2})! \pi}} \int_{x_i}^{x_{i+1}} e^{-x^2} H_{\frac{N}{2}}(x) H_{\frac{N}{2}}(x) dx.
    \end{aligned}
    \end{equation}
 Similar to the previous case, this probability expression depends only on $\text{Re}(\beta)$ and $|\beta|$, and is insensitive to the sign of $\text{Im}(\beta)$. This indicates that even when $N$ is within the range $n_{\mathrm{max}} < N \leq 2n_{\mathrm{max}} $ and $N$ is even, the POVM $\{\Pi_{i,k}\}$ remains incomplete as it cannot distinguish $\ket{\psi}$ and $\ket{\psi^\prime} = (\alpha|0\rangle + \beta^*|\frac{N}{2}\rangle)/\sqrt{2}$.
    
    Therefore, under both scenarios, when $N \leq n_{\mathrm{max}}$, and when $n_{\mathrm{max}} < N \leq 2n_{\mathrm{max}}$ with an even $N$, the POVM $\{\Pi_{i,k}\}$ fails to be informationally complete. This concludes the proof of Theorem \ref{theorem:2}.
\end{proof}

Overall, Theorem \ref{theorem:1} and Theorem \ref{theorem:2} collectively provide a comprehensive framework for understanding the requirements of informationally completeness in homodyne detection-based shadow estimation. Theorem \ref{theorem:1} offers sufficient conditions, ensuring that with enough phase settings and quadrature bins, complete state reconstruction is achievable. Conversely, Theorem \ref{theorem:2} outlines the boundaries within which informationally completeness is possible, highlighting the necessity of a minimum and appropriately constrained number of phase settings relative to the photon number cutoff. Physically, these theorems underscore the importance of optimizing the measurement setup by balancing the diversity of phase angles and the resolution of quadrature bins. Experimentally, they serve as crucial guidelines for designing homodyne detection systems, ensuring that the chosen number of phases and quadrature bins are adequate for accurate and complete quantum state characterization. This is particularly vital for applications involving high-precision quantum information protocols, where the fidelity of state reconstruction directly impacts the performance and reliability of quantum technologies.

\section{Variance Analysis}
The performance of a classical shadow protocol is ultimately quantified by the precision of its observable estimations—specifically, the variance of the estimator for an arbitrary observable $X$, which directly determines the number of measurements required to achieve a target accuracy.
In this section, we give the upper bound on the variance of shadow estimation in terms of key system parameters.

\begin{theorem}\label{theorem:3}
Considering an observable $X$, the variance of its estimator $\hat{X}$ has an upper bound in terms of system parameters $N$, $M$ and $n_{\mathrm{max}}$,
\begin{align}
\mathrm{var}(\hat{X}) \leq N (n_{\mathrm{max}}+1) M^2 ||X||_\infty^2.
\end{align}
\end{theorem}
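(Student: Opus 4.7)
The plan is to start from the variance bound in Eq.~\eqref{eq:variance}, $\mathrm{var}(\hat X)\leq \|X\|_E = \lambda_{\max}(\sum_{i,k}|\Tr(\sigma_{i,k}X)|^2\Pi_{i,k})$ with $\sigma_{i,k}=C_E^{-1}(\Pi_{i,k}/|I_i|)$, and to peel off the three factors $N$, $n_{\mathrm{max}}+1$, and $M^2$ through a chain of operator inequalities. The first reduction exploits the POVM normalization $\sum_{i,k}\Pi_{i,k}=\mathbb{I}$: for any non-negative scalars $c_{i,k}$ one has $\sum_{i,k}c_{i,k}\Pi_{i,k}\preceq (\max_{i,k}c_{i,k})\mathbb{I}$, so $\|X\|_E\leq \max_{i,k}|\Tr(\sigma_{i,k}X)|^2$. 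The remaining task is to bound a single trace uniformly over outcomes.

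Next, I would apply the Cauchy-Schwarz inequality in the Hilbert-Schmidt inner product, $|\Tr(\sigma_{i,k}X)|^2\leq \|\sigma_{i,k}\|_F^2\,\|X\|_F^2$, together with the dimensional bound $\|X\|_F^2\leq (n_{\mathrm{max}}+1)\|X\|_\infty^2$ that holds on the truncated $(n_{\mathrm{max}}+1)$-dimensional Fock space. This produces the factor $(n_{\mathrm{max}}+1)$ in the target bound and reduces the task to showing $\max_{i,k}\|\sigma_{i,k}\|_F^2=\mathcal{O}(NM^2)$.

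To control $\|\sigma_{i,k}\|_F^2$, I would pass to the vectorized picture. With the equal-width bins $|I_i|=2L/M$ produced by Algorithm~\ref{alg:adaptive_quad_bins}, the super-operator factorizes as $C_E=|I_i|^{-1}EE^\dagger$, where $E$ is the $(n_{\mathrm{max}}+1)^2\times NM$ matrix whose columns are the vectorized POVM elements $\vec\Pi_{i,k}$, and $\vec\sigma_{i,k}=(EE^\dagger)^{-1}\vec\Pi_{i,k}$. The inequality $\Pi_{i,k}\preceq \mathbb{I}/N$, which follows from $\sum_i\Pi_{i,k}=\mathbb{I}/N$ under the uniform choice of phases, gives $\|\Pi_{i,k}\|_F^2\leq \Tr(\Pi_{i,k})\,\|\Pi_{i,k}\|_\infty \leq (n_{\mathrm{max}}+1)/N^2$. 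Combining this with a lower bound on the smallest eigenvalue of $EE^\dagger$ (equivalently on $\lambda_{\min}(C_E)$) then delivers the desired scaling $\|\sigma_{i,k}\|_F^2=\mathcal{O}(NM^2)$, and chaining the three reductions produces the claimed bound.

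The principal obstacle is precisely the quantitative lower bound on $\lambda_{\min}(C_E)$ that closes the chain. Theorem~\ref{theorem:1} guarantees that $E$ has full row rank under the informational-completeness conditions, but translating that qualitative statement into the precise scaling needed here requires a dedicated analysis of the Gram matrix $EE^\dagger$ through the integral and orthogonality identities of the Hermite polynomials appearing in Eq.~\eqref{eq:povm_metrix_element}, along with the dependence on the quadrature range $L$ absorbed into the absolute constant. This is where the geometry of discrete homodyne detection enters most sharply; once that estimate is in place, assembling the three reductions yields $\mathrm{var}(\hat X)\leq N(n_{\mathrm{max}}+1)M^2\|X\|_\infty^2$.
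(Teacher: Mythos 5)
Your proposal reproduces the paper's architecture step for step: bound $\|X\|_E$ by $\max_{i,k}|\Tr(\sigma_{i,k}X)|^2$ using $\sum_{i,k}\Pi_{i,k}=\mathbb{I}$, apply Hilbert--Schmidt Cauchy--Schwarz with $\|X\|_{\mathrm{HS}}^2\le(n_{\mathrm{max}}+1)\|X\|_\infty^2$, and reduce what remains to a lower bound on $\lambda_{\min}(C_E)$ combined with $\Pi_{i,k}\preceq\mathbb{I}/N$ and $|I_i|=L/M$. The genuine gap is exactly the step you flag as ``the principal obstacle'': you never establish the quantitative lower bound on $\lambda_{\min}(C_E)$, and you predict it requires a dedicated Gram-matrix analysis through Hermite-polynomial integral identities. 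It does not. The paper closes it in two lines with no Hermite polynomials: for a normalized state $\rho$,
\begin{align}
\langle\rho|C_E|\rho\rangle
&=\sum_{k=1}^{N}\sum_{i=1}^{M}\frac{[\Tr(\rho\Pi_{i,k})]^2}{|I_i|}
\ge \sum_{k=1}^{N}\frac{\bigl(\sum_{i}\Tr(\rho\Pi_{i,k})\bigr)^2}{\sum_i|I_i|}\nonumber\\
&=\sum_{k=1}^{N}\frac{(1/N)^2}{L}=\frac{1}{NL},
\end{align}
where the inequality is Cauchy--Schwarz in the form $\sum_i a_i^2/w_i\ge(\sum_i a_i)^2/\sum_i w_i$ and $\sum_i\Tr(\rho\Pi_{i,k})=1/N$ follows from the uniform phase choice. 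With $\lambda_{\min}(C_E)\ge 1/(NL)$ and the paper's estimate $\|\Pi_{i,k}\|_{\mathrm{HS}}^2\le 1/N$, your chain closes: $\|\sigma_{i,k}\|_{\mathrm{HS}}^2\le(NL)^2\cdot(1/N)\cdot(M/L)^2=NM^2$, yielding $N(n_{\mathrm{max}}+1)M^2\|X\|_\infty^2$. As written, your text is a correct reduction of the theorem to an unproved estimate, not a proof.

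Two secondary remarks. First, if you do complete the argument this way, be aware that the identity $\sum_i\Tr(\rho\Pi_{i,k})=1/N$ uses $\Tr\rho=1$, so the displayed lower bound degenerates on traceless Hermitian operators; a statement about the full spectrum of the superoperator $C_E$ needs additional care (a subtlety the paper's own proof shares). Second, your sharper bound $\|\Pi_{i,k}\|_{\mathrm{HS}}^2\le(n_{\mathrm{max}}+1)/N^2$ would give $(n_{\mathrm{max}}+1)^2M^2\|X\|_\infty^2$, which under the informational-completeness condition $N\ge 2n_{\mathrm{max}}+1$ is at least as strong as the stated bound --- a harmless deviation, not an error.
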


This theorem establishes a fundamental limit on the precision of measuring the operator $X$ in a CV quantum system with practical homodyne detection. 
Note that, according to Theorem~\ref{theorem:1}, for the POVM to be informationally complete, $N$ and $M$ need to be $\mathcal{O}(n_{\mathrm{max}})$. Therefore, we can relate the variance upper bound to $n_{\mathrm{max}}$, i.e., $\mathrm{var}(\hat{X}) \sim \mathcal{O}(n_{\mathrm{max}}^4)$.

\begin{proof}
Recalling Eq.~\eqref{eq:variance}, we define $\| X \|_E$ as an upper bound of $\text{var}(\hat{X})$. Next we try to find an upper bound of $\| X \|_E$ in terms of system parameters, i.e., the number of quadrature bins $M$, the number of phase angle samples $N$, and the photon number cutoff $n_\mathrm{max}$. We introduce $L = \sum_{i=1}^M |I_i|$ is the total length of the quadrature bins. We assume each phase of local oscillator is chosen with equal probability in homodyne detection. Then $\sum_{i=1}^M \Pi_{i,k} = \frac{1}{N} \mathbb{I}$ for each fixed $k$. 

If we consider the vectorization of density matrix $\ket{\rho}$, the map $C_E(\rho)$ has an operator form $C_E$.
For any density matrix $\rho$ with $\|\rho\|_{\text{HS}} = 1$, where $\|\cdot\|_{\text{HS}}$ denotes the Hilbert-Schmidt norm, we have,
\begin{align}
\langle \rho | C_E | \rho \rangle = \sum_{i=1}^M \sum_{k=1}^N \frac{[\text{Tr}(\rho \Pi_{i,k})]^2}{|I_i|}.
\end{align}
Using the Cauchy-Schwarz inequality for each fixed $k$, we obtain
\begin{align}
\sum_{i=1}^M \frac{[\text{Tr}(\rho \Pi_{i,k})]^2}{|I_i|} \geq \frac{\left( \sum_{i=1}^M \text{Tr}(\rho \Pi_{i,k}) \right)^2}{\sum_{i=1}^M |I_i|} = \frac{\left( \frac{1}{N} \right)^2}{L} = \frac{1}{N^2 L}.
\end{align}
Summing over all $k$, we have
\begin{align}
\langle \rho | C_E | \rho \rangle \geq \sum_{k=1}^N \frac{1}{N^2 L} = \frac{N}{N^2 L} = \frac{1}{N L}.
\end{align}
Thus, the minimum eigenvalue of $C_E$ satisfies
\begin{align}
\lambda_{\text{min}} \geq \frac{1}{N L}.
\end{align}

Next, consider the operator inside the definition of $\| X \|_E$,
\begin{align}
\tilde{X} = \sum_{i=1}^M \sum_{k=1}^N \left| \Tr\left(C_E^{-1}(\Pi_{i,k}/|I_i|) X \right) \right|^2 \Pi_{i,k}.
\end{align}
We wish to bound the term $|\text{Tr}(C_E^{-1}(\Pi_{i,k}/|I_i|) X)|^2$. Let $\hat{\rho}_{i,k} = C_E^{-1}(\Pi_{i,k}/|I_i|)$ denote the single shadow snapshot.
First, we apply the Cauchy-Schwarz inequality for the Hilbert-Schmidt inner product ($|\text{Tr}(A^\dagger B)| \le ||A||_{\mathrm{HS}} ||B||_{\mathrm{HS}}$), noting that $\hat{\rho}_{i,k}$ is Hermitian since $C_E$ and $\Pi_{i,k}$ are self-adjoint:
\begin{align}
    |\text{Tr}(\hat{\rho}_{i,k} X)| \le ||\hat{\rho}_{i,k}||_{\mathrm{HS}} ||X||_{\mathrm{HS}}.
\end{align}
Squaring both sides yields:
\begin{align}
    |\text{Tr}(\hat{\rho}_{i,k} X)|^2 \le ||\hat{\rho}_{i,k}||_{\mathrm{HS}}^2 ||X||_{\mathrm{HS}}^2.
    \label{eq:bound_hs_step1}
\end{align}
Next, we bound the HS norm of the snapshot using the induced HS norm of the superoperator $C_E^{-1}$. Since $C_E$ is a positive self-adjoint superoperator, the induced HS norm of its inverse is equal to the inverse of its minimum eigenvalue in the HS sense: $||C_E^{-1}||_{\mathrm{HS} \to \mathrm{HS}} = 1/\lambda_{\mathrm{min}}^{(\mathrm{HS})}(C_E)$.
\begin{equation}
\begin{aligned}
    ||\hat{\rho}_{i,k}||_{\mathrm{HS}} &= ||C_E^{-1}(\Pi_{i,k}/|I_i|)||_{\mathrm{HS}} \\
    &\le ||C_E^{-1}||_{\mathrm{HS} \to \mathrm{HS}} \cdot ||\Pi_{i,k}/|I_i|||_{\mathrm{HS}} \\
    &= \frac{1}{\lambda_{\mathrm{min}}^{(\mathrm{HS})}(C_E)} \frac{||\Pi_{i,k}||_{\mathrm{HS}}}{|I_i|}.
\end{aligned}
\end{equation}
Squaring this gives:
\begin{align}
    ||\hat{\rho}_{i,k}||_{\mathrm{HS}}^2 \le \frac{1}{(\lambda_{\mathrm{min}}^{(\mathrm{HS})}(C_E))^2} \frac{||\Pi_{i,k}||_{\mathrm{HS}}^2}{|I_i|^2}.
    \label{eq:bound_hs_step2}
\end{align}
Substituting Eq.~\eqref{eq:bound_hs_step2} into Eq.~\eqref{eq:bound_hs_step1}, we have
\begin{align}
    |\text{Tr}(\hat{\rho}(i,k) X)|^2 \le \frac{||\Pi_{i,k}||_{\mathrm{HS}}^2}{(\lambda_{\mathrm{min}}^{(\mathrm{HS})}(C_E))^2 |I_i|^2} \cdot ||X||_{\mathrm{HS}}^2.
    \label{eq:bound_hs_step3}
\end{align}
Finally, we relate the HS norm of $X$ to its spectral norm (operator norm) $||X||_\infty$. For an operator $X$ acting on an $(n_{\mathrm{max}}+1)$-dimensional space, $||X||_{\mathrm{HS}}^2 = \sum_j s^2_j(X) \le (n_{\mathrm{max}}+1) s^2_1(X) = (n_{\mathrm{max}}+1) ||X||_\infty^2$, where $s_j(X)$ is the $j$-th largest singular value of $X$. Applying this we have
\begin{equation}
\begin{aligned}
    |\text{Tr}(C_E^{-1}(\Pi_{i,k}/|I_i|) X)|^2 &\le \frac{||\Pi_{i,k}||_{\mathrm{HS}}^2}{(\lambda_{\mathrm{min}}^{(\mathrm{HS})}(C_E))^2 |I_i|^2} \cdot (n_{\mathrm{max}}+1)  ||X||_\infty^2\\
    &\le 
    \frac{ N (n_{\mathrm{max}}+1) L^2}{|I_i|^2}\|X\|_{\infty}^2,
    \label{eq:bound_hs_final}
\end{aligned}
\end{equation}
where we have used the fact that $\Pi_{i,k}\le \mathbb{I}/N$ and $||\Pi_{i,k}||_{\mathrm{HS}}^2\le 1/N$.  A possible choice is that $|I_i|$ is independent of $i$, i.e., $|I_i| = L/M$. Then 
\begin{align}
    |\text{Tr}(C_E^{-1}(\Pi_{i,k}/|I_i|) X)|^2 &\le N (n_{\mathrm{max}}+1) M^2 \|X\|_{\infty}^2.
\end{align}
Therefore,
\begin{equation}
\begin{aligned}
\tilde{X}  & \leq \sum_{i=1}^M \sum_{k=1}^N N(n_{\mathrm{max}}+1) M^2 \|X\|_{\infty}^2 \Pi_{i,k} \\
& = N(n_{\mathrm{max}}+1) M^2 \|X\|_{\infty}^2 \sum_{i=1}^M \sum_{k=1}^N \Pi_{i,k} \\
&= N(n_{\mathrm{max}}+1) M^2 \|X\|_{\infty}^2\mathbb{I}
\end{aligned}
\end{equation}
Taking the maximum eigenvalue
\begin{align}
\| X \|_E = \lambda_{\mathrm{max}} \{ \tilde{X} \} \leq N(n_{\mathrm{max}}+1) M^2\|X\|_{\infty}^2
\end{align}

Combining the results from Step 1 and Step 2, we conclude that:
\begin{align}
\text{var}(\hat{X}) \leq \| X \|_E \leq N(n_{\mathrm{max}}+1) M^2 \|X\|_{\infty}^2.
\end{align}
This completes the proof. 
\end{proof}
According to Theorem~\ref{theorem:1}, we only require $N \sim \mathcal{O}(n_{\mathrm{max}})$ and $M \sim \mathcal{O}(n_{\mathrm{max}})$ for the POVM to be informationally complete. In this case,
\begin{align}\label{eq: final_variance}
   \text{var}(\hat{X}) \leq \| X \|_E \leq \mathcal{O}(n_{\mathrm{max}}^4) \|X\|_{\infty}^2. 
\end{align}
It is instructive to compare the sample complexity derived from our direct variance analysis with that implied by the general trace norm bounds for homodyne shadow tomography, such as those presented in~\cite{gandhari2024precision}. The framework in~\cite{gandhari2024precision} guarantees a trace norm error $||\hat{\sigma}^{n_{\mathrm{max}}} - \rho^{n_{\mathrm{max}}}||_1 \le \epsilon_{\mathrm{Tr}}$ (where the state is also projected onto the subspace with photon number cutoff $n_{\mathrm{max}}$) with a sample complexity $T \sim \mathcal{O}(n_{\mathrm{max}}^{13/3} \ln n_{\mathrm{max}} / \epsilon_{\mathrm{Tr}}^2)$. Using Hölder's inequality $|\text{Tr}(X\hat{\sigma}^{n_{\mathrm{max}}}) - \text{Tr}(X\rho^{n_{\mathrm{max}}})| \le ||X||_\infty ||\hat{\sigma}^{n_{\mathrm{max}}} - \rho^{n_{\mathrm{max}}}||_1$, achieving an absolute error $\Delta$ for the expectation value of observable $X$ requires sample complexity $T \sim \mathcal{O}(n_{\mathrm{max}}^{13/3} \ln n_{\mathrm{max}} ||X||_\infty^2 / \Delta^2)$. In contrast, our variance analysis (Eq.~\eqref{eq: final_variance}) suggests a sample complexity $T \sim \mathcal{O}(n_{\mathrm{max}}^4 ||X||_\infty^2 / \Delta^2)$ to achieve the same $\Delta$. This means that, the $\mathcal{O}(n_{\mathrm{max}}^4)$ scaling from our direct variance approach is better than the $\mathcal{O}(n_{\mathrm{max}}^{13/3} \ln n_{\mathrm{max}}) \approx \mathcal{O}(n_{\mathrm{max}}^{4.33} \ln n_{\mathrm{max}})$ scaling implied by the worst-case trace norm bounds, suggesting a potential advantage in sample efficiency when estimating specific observables using our discretized POVM framework, complementing other advanced tomography methods like neural network quantum state tomography or compressed sensing~\cite{torlai2018neural, carrasquilla2019reconstructing, gross2010quantum, ohliger2013efficient}.
\section{Multi-mode scenario}

In this section, we extend our framework from a single-mode continuous-variable system to a multi-mode scenario, which is more representative of complex quantum optical systems and platforms for quantum information processing. We construct the multi-mode classical shadow estimator and analyze its variance.

\subsection{Estimator Construction for Multi-mode Systems}

Consider a system composed of $S$ distinct optical modes. The Hilbert space of this composite system is the tensor product of the individual mode spaces: $\mathcal{H} = \bigotimes_{j=1}^{S} \mathcal{H}_j$. The state of the system is described by a density operator $\rho$ acting on $\mathcal{H}$. To characterize this multi-mode state, we perform  homodyne measurements on each mode independently.

The measurement process for the entire system is described by a set of POVM elements constructed from the tensor product of the single-mode POVMs defined in Section \ref{sec:discretePOVM}. A measurement outcome is given by the tuple $(\vec{i}, \vec{k}) = ((i_1, \dots, i_S), (k_1, \dots, k_S))$, corresponding to observing the quadrature outcome in bin $I_{i_j}$ with phase setting $\theta_{k_j}$ for each mode $j$. The corresponding POVM element is
\begin{equation}
    \Pi_{\vec{i}, \vec{k}} = \bigotimes_{j=1}^{S} \Pi_{i_j, k_j}^{(j)}.
\end{equation}

Experimentally, extending the protocol to multi-mode systems presents significant but manageable challenges: this requires deploying an array of parallel homodyne detectors, each with an independently phase-controlled local oscillator. Crucially, precise synchronization of data acquisition across all modes is necessary to capture a valid snapshot of the multi-mode state, and maintaining phase stability for each channel is key to fulfilling the protocol's statistical assumptions. Despite the increased hardware complexity, such parallel architectures are increasingly feasible in integrated photonic circuits and other advanced quantum platforms.

Following a measurement that yields the outcome $(\vec{i}, \vec{k})$, we construct a "snapshot" of the quantum state serving as an unbiased estimator $\hat{\rho}_{\vec{i}, \vec{k}}$,
\begin{equation}
    \hat{\rho}_{\vec{i}, \vec{k}} = \bigotimes_{j=1}^{S} \left(C_{E,j}^{-1}\left(\frac{\Pi_{i_j, k_j}^{(j)}}{|I_{i_j}|}\right)\right).
\end{equation}
where $C_{E,j}^{-1}$ is the inverse of the linear map for the $j$-th mode and the unbiasedness can be proven similarly as the single mode scenario. The final estimate of the quantum state, $\hat{\rho}$, is obtained by averaging many such snapshots from repeated measurements.

\subsection{Variance Analysis and Sample Complexity}

To estimate the expectation value of an observable $X$ acting on the $S$-mode system, we compute $\text{Tr}(X\hat{\rho})$. The variance of this estimation is bounded by the multi-mode shadow norm $\|X\|_{E, \text{multi}}$.

Let us consider an observable $X$ that can be decomposed into a tensor product of single-mode operators: $X = \bigotimes_{j=1}^{S} X_j$. The variance of its estimator can be shown to be bounded by:
\begin{equation}
    \mathrm{var}(\hat{X}) \le \|X\|_{E, \text{multi}} = \prod_{j=1}^{S} \|X_j\|_{E,j}
\end{equation}
where $\|X_j\|_{E,j}$ is the single-mode shadow norm for operator $X_j$ in mode $j$. It is straightforward that the multi-mode variance scales exponentially with the number of modes, which coincides with the continuous phase settings \cite{gandhari2024precision}.

A key advantage of the shadow estimation framework becomes evident when considering local observables. A $k$-local observable is an operator that acts non-trivially only on $k$ modes and as the identity on the remaining $S-k$ modes. For such an observable, the shadow norm depends only on the $k$ modes, making the sample complexity required to estimate its expectation value independent of the total system size $S$. This scalability is crucial for characterizing large quantum systems efficiently. For an observable $X$ acting non-trivially on a subset of modes $V \subset \{1, \dots, S\}$, the required number of measurements $T$ to achieve a precision $\Delta$ scales as:
\begin{equation}
    T \sim \mathcal{O}\left(\frac{\prod_{j \in V} n^4_{\mathrm{max},j}}{\Delta^2}\right)
\end{equation}
This demonstrates the efficiency of our discretized protocol for estimating local properties of multi-mode continuous-variable quantum states.

\section{Numerical simulation}

\begin{figure}[htbp] 
    \centering 

    \begin{subfigure}{0.46\textwidth} 
        \centering 
        \includegraphics[width=\linewidth]{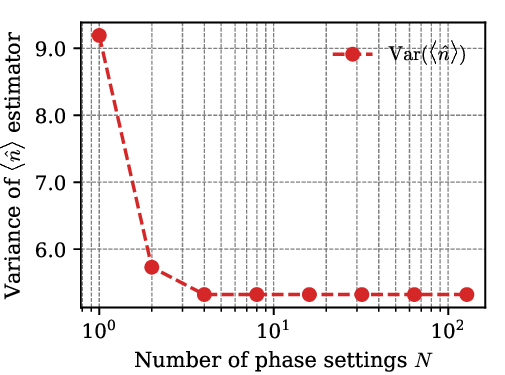}
        \caption{} 
        \label{fig:subfig_a} 
    \end{subfigure}
    \hfill 
    \begin{subfigure}{0.46\textwidth} 
        \centering
        \includegraphics[width=\linewidth]{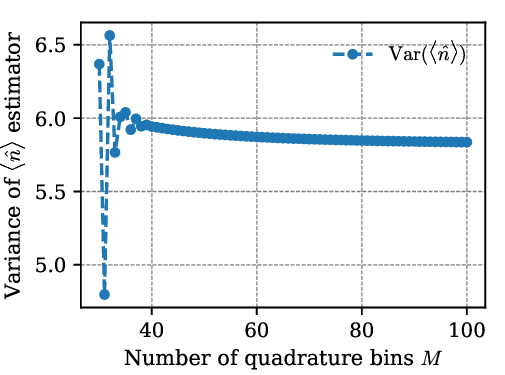}
        \caption{} 
        \label{fig:subfig_b}
    \end{subfigure}

    \begin{subfigure}{0.46\textwidth} 
        \centering
        \includegraphics[width=\linewidth]{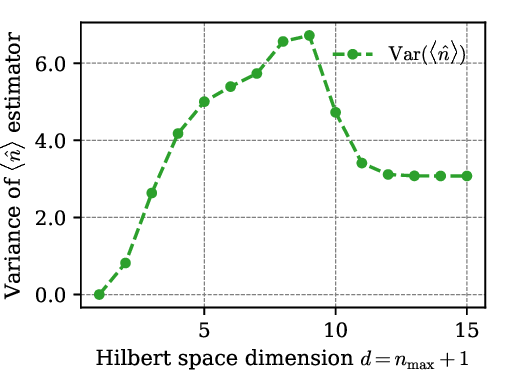}
        \caption{} 
        \label{fig:subfig_c}
    \end{subfigure}
    \hfill 
    \caption{Variance of the estimator of expectation $\langle \hat{n} \rangle = \langle a^\dag a \rangle$ versus system parameters $N$, $M$ and $n_{\mathrm{max}}$.} 
    \label{fig:main_figure} 
\end{figure}
In this section, we investigate numerically how the variance of a single shot estimate varies with different system parameters. We choose a coherent state with mean photon number $|\alpha|^2=1$ and employ our method to construct classical shadows to estimate the expectation value of the mean photon number, i.e., $\hat{n}=a^\dag a$. In Fig.~\ref{fig:subfig_a}, with $M=50$ quadrature bins and $n_{\mathrm{max}}=5$, we vary the number of discrete phases $N$ and compute the variance of the estimate. We observe that the variance saturates quickly and shows almost no dependence on the number of discrete phases when $N \sim 10^1$. For the first few points, the POVM is not informationally complete so that the linear map $\mathcal{C}_E$ is not invertible. One can still define an estimator by replacing the inverse $\mathcal{C}_E^{-1}$ with the Moore–Penrose pseudoinverse $\mathcal{C}_E^{+}$. This yields a well-defined snapshot
\[
\hat{\rho}_{i,k} = \mathcal{C}_E^{+}\!\left(\frac{\Pi_{i,k}}{|I_i|}\right),
\]
which is the minimum-Hilbert–Schmidt-norm solution to the equation $\mathcal{C}_E(\sigma) = \Pi_{i,k}/|I_i|$ in the least-squares sense. While this allows numerical evaluation of observable estimates for any choice of $N$ and $M$, the resulting estimator is generally biased unless $\mathcal{C}_E$ is full rank (i.e., the POVM is informationally complete). We include these initial points even though they do not meet the informational completeness requirement to clearly illustrate the trend of variance as $N$ increases and identify the point where the variance becomes flat. 

The effect of the number of quadrature bins $M$ was also investigated. As shown in Fig.~\ref{fig:subfig_b}, keeping $N$ fixed at $32$ and $n_{\mathrm{max}}=5$, we increased $M$ and observed a gradual decline in the variance. Notably, when $M$ is small (near the informational completeness threshold), the variance exhibits fluctuations before declining gradually. This initial jitter arises from some interconnected factors, all rooted in the discretized nature of the POVM and its dependence on quadrature binning.

 We further illustrate the dependence of variance on the Fock space truncation dimension $n_\mathrm{max}$. As shown in Fig.~\ref{fig:subfig_c}, using our method with $M=100$ fixed quadrature bins and $N=32$ discrete phases, the variance first increases and then drops significantly to a plateau as the truncation dimension grows from $1$ to $15$. We intuitively explain the variance behavior versus different parameters. The decrease in the variance with increasing $N$, $M$ stems from improved measurement precision which collectively lower the uncertainty of measurement outcomes.
 The initial increase in variance with $n_{\mathrm{max}}$ arises because expanding the truncated Fock space dimension raises the information required for state characterization. While the variance decreases as higher-photon-number components of the signal state (here a coherent state) become negligible and the reduced truncation error improves the accuracy of snapshots.

\section{Conclusion}

In this work, we establish a framework for practical shadow estimation in continuous-variable quantum optical systems. We have developed an experimentally feasible shadow protocol using discretized homodyne detection with discrete phase modulation and quadrature bins, constructing an unbiased estimator tailored to realistic experimental constraints. Our central theoretical contributions include deriving both sufficient conditions and necessary conditions for informational completeness, which fundamentally characterize when the discretized measurement can uniquely identify quantum states within the truncated Fock space. Furthermore, we performed comprehensive variance analysis, establishing the favorable $\mathcal{O}(n_{\mathrm{max}}^4)$ scaling of the variance upper bound that improves upon prior $\mathcal{O}(n_{\mathrm{max}}^{13/3})$ results. Numerical simulations validated our theoretical framework, demonstrating systematic reduction with increased discrete phases, quadrature bin resolution, and photon number cutoff.

This work opens several promising research directions for continuous-variable shadow estimation. A natural extension involves developing similar experimental-friendly frameworks for other key CV measurement paradigms, including discretized versions of photon-number resolving detection that account for finite photon resolution and saturation effects, as well as binning strategies for heterodyne detection outcomes that simultaneously measure conjugate quadratures \cite{rohde2015prospects, lita2008counting, yuen1980optical, d1999quantum}. Beyond measurement adaptations, significant potential exists in exploring CV shadow applications mirroring their discrete-variable counterparts, particularly in quantum error mitigation techniques like virtual distillation and probabilistic error cancellation adapted to infinite-dimensional spaces, and in efficient entanglement detection protocols for continuous-variable systems \cite{li2017efficient, temme2017error, ralph2011continuous, duan2000inseparability, simon2000peres, giovannetti2003characterizing}. Further opportunities include combining federated learning framework \cite{song2024quantum}, integrating machine learning or online learning \cite{jiang2025shadow} techniques for enhanced state reconstruction, and implementing real-time shadow processing in photonic and superconducting quantum platforms to characterize non-Gaussian states and operations \cite{wang2020integrated, blais2021circuit}.

\section*{Acknowledgements}
This work was supported in part by the National Natural Science Foundation of China Grants No. 92465202, 62325210, 12447107, 12204489, 62301531, 12501450, and Beijing Natural Science Foundation No. 4252013, 4252012.

\bibliography{bibcvshadow}

\end{document}